\begin{document}

\title{Collision Codes: Decoding Superimposed BPSK Modulated Wireless Transmissions}

\author{Chuan Heng Foh, Jianfei Cai, and Jalaluddin Qureshi\\
School of Computer Engineering\\
Nanyang technological University, Singapore\\
\{aschfoh, asjfcai, jala0001\}@ntu.edu.sg}%

\maketitle

\begin{abstract}
The introduction of physical layer network coding gives rise to the
concept of turning a collision of transmissions on a wireless
channel useful. In the idea of physical layer network coding, two
synchronized simultaneous packet transmissions are carefully encoded
such that the superimposed transmission can be decoded to produce a
packet which is identical to the bitwise binary sum of the two
transmitted packets. This paper explores the decoding of
superimposed transmission resulted by multiple synchronized
simultaneous transmissions. We devise a coding scheme that achieves
the identification of individual transmission from the synchronized
superimposed transmission. A mathematical proof for the existence of
such a coding scheme is given.
\end{abstract}

\section{Introduction}
The concept of physical-layer network coding (PNC) first introduced
in 2006 by Zhang \emph{et al.} reveals the idea of decoding a
transmission collision on a wireless channel \cite{PNC}. This
concept directly challenges the \emph{traditional rule} that a
collided transmission on a wireless channel is undecodable. In this
pioneering work, it has been demonstrated that a collision of two
simultaneous wireless transmissions can be turned into a useful
transmission. In brief, two simultaneous wireless transmissions that
are added together at the electromagnetic wave level can be decoded
and mapped to produce an outcome such that the relationship between
the transmitted and the decoded binary information follows the
exclusive-or (XOR) principle.

The concept of PNC is found to have potential in enhancing
performance of the current wireless networks. In the original paper
describing PNC in \cite{PNC}, the concept is demonstrated to be
effective for further performance enhancement of network coding
operation in wireless environment~\cite{Ahlswede00,Katti06}.
In~\cite{pnc_capacity} the suitability of PNC is shown to improve
the throughput capacity of a random wireless network by a fixed
factor. Modulation and mapping schemes are proposed to allow for the
decoding of a collision from two simultaneous transmissions.
However, the scheme is only suitable for decoding a superimposed
transmission consisting only two simultaneous transmissions within,
and this greatly limits its applications.

Recently, Durvy \emph{et al.}~\cite{Durvy07} applies the idea of
decoding superimposed signals in PNC to improve the reliability of
wireless broadcasting. While traditional approaches often attempt to
avoid collisions~\cite{Pagani97,Tang01}, in~\cite{Durvy07}, the
authors propose modified ARQ operation to invite collision of
multiple acknowledgement (ACK) transmission and then design a scheme
to decode the collided ACK transmissions. Their idea is that upon
receiving a broadcast transmission, each receiver detecting the
transmission replies with an ACK transmission. These simultaneous
ACK transmissions will cause a collision. Using the concept of PNC,
decoding of the superimposed ACK transmissions is performed to
identify the ACK transmitters. Their method assumes synchronous
simultaneous ACK transmissions and the capability of precise
detection of signal energy. Since simultaneous ACK transmissions
appear after the completion of a broadcast transmission which is a
common event, simultaneous ACK transmissions may be considered
synchronized to a certain extent. However, the requirement of
precise detection of signal energy for decoding introduces
difficulty in the practical design.

We here define \emph{collision decoding} to be the capability of the
identification of individual transmission from the superimposed
transmissions. In other words, it is the capability for the
collision decoder to tell which set of stations has transmitted and
which has no transmitted in a superimposed transmission. A
particular application of this capability has been shown
in~\cite{Durvy07} to be beneficial in protocol design and
enhancement. However in~\cite{Durvy07}, the need for precise
detection of signal energy for decoding has reduced the
practicability of the proposal. In this paper, we devise a coding
scheme for collision decoding, that particularly addresses the
challenge of signal energy detection. Our scheme overcomes the
shortcoming of the method given in~\cite{Durvy07}. We also
demonstrate with a mathematical proof the existence of such a coding
scheme.

The rest of the paper is organized as follows. Section~\ref{sec:background}
revisits the PNC technique and describes our proposed scheme. In
Section~\ref{sec:proof}, we give the mathematical proof that our proposed
coding scheme allows unique identification of different receiver combinations.
Finally, important conclusion is drawn in Section~\ref{sec:conclusion}.

\section{Decoding a Collision}\label{sec:background}

\subsection{Physical-Layer Network Coding}

The concept of PNC was first introduced by Zhang \emph{et al.}
in~\cite{PNC}. The authors proposed a frame-based Decode-and-Forward
(and amplify if necessary) strategy in packet forwarding. In their
scenario, two neighboring nodes transmit simultaneously to a common
receiver. Assuming perfect transmission synchronization in physical
layer, based on the additive nature of simultaneously arriving
electromagnetic (EM) waves, the receiver detects the added signal of
the two transmitted modulated signals. Using a suitable mapping
scheme, they show that for certain modulation schemes, there exists
a mapping scheme such that the relationship between the two
transmitted binary bits and the decoded binary bit follows the XOR
principle.

We here revisit the PNC operation~\cite{PNC} and its mapping scheme
to achieve the XOR principle. Consider two senders, $N_1$ and $N_3$,
and a common receiver $N_2$. Let $s_1$ and $s_3$ be the binary bit
transmitted by $N_1$ and $N_3$ at a particular time respectively,
and $s_2$ be the decoded binary bit. Based on BPSK modulation, we
have
\begin{equation}
\begin{array}{ll}
r_2(t) = a_1 \cos(\omega t) + a_3 \cos(\omega t) = (a_1+a_3) \cos(\omega t) \\
\end{array}
\end{equation}
where $r_2(t)$ is the received signal, $a_1$ and $a_3$ are the
transmitted amplitudes, and $\omega$ is the carrier frequency. For
BPSK, we have $a_j=2s_j-1$~\cite{PNC}. At $N_2$, a scheme (see
Table~\ref{table:PNC}) that maps a strong energy signal to binary 0
(i.e. $|a_1+a_3|=2$) and a weak energy signal (i.e. $a_1+a_3=0$) to
binary 1 can be applied which gives
\begin{equation}
s_2 = s_1 \oplus s_3.
\end{equation}

\begin{table}
\centering \caption{\label{table:PNC} The PNC mapping for two transmitting
nodes using BPSK.}
\begin{tabular}{|c|c|c|c|c|c|}
\hline \multicolumn{4}{|c|}{Modulation mapping} &
\multicolumn{2}{c|}{Demodulation mapping} \\
\multicolumn{4}{|c|}{at $N_1$ and $N_3$} &
\multicolumn{2}{c|}{at $N_2$} \\
\hline \multicolumn{2}{|c}{Input} & \multicolumn{2}{|c}{Output} &
\multicolumn{1}{|c|}{Input} & \multicolumn{1}{c|}{Output} \\
\hline $s_1$ & $s_3$ & $a_1$ & $a_3$ & $a_1$ + $a_3$ & $s_2$ \\
\hline 1 & 1 & 1 & 1 & 2 & 0 \\
\hline 0 & 1 & -1 & 1 & 0 & 1 \\
\hline 1 & 0 & 1 & -1 & 0 & 1 \\
\hline 0 & 0 & -1 & -1 & -2 & 0 \\
\hline
\end{tabular}
\end{table}

\subsection{Collision of Multiple BPSK Modulated Transmissions}

Considering a collision consisting of an arbitrary number of
simultaneous transmissions, according to Table~\ref{table:PNC},
retaining the XOR principle for the decoding requires precise
detection of signal energy for the PNC mapping, which reduces its
robustness. In~\cite{Durvy07}, the authors proposed a scheme to deal
with decoding of multiple simultaneous transmissions. The scheme
proposes bit sequences of $N+1$ bits that is suitable for decoding a
collision of up to $N$ simultaneous transmissions. Likewise, the
main limitation of the proposed scheme is that it also requires
precision of power level differentiation in the decoding procedure.
Precisely, comparisons of analog received signals are needed for the
operation, and the authors propose use of a delay line to store
analog signal for the comparison purpose. If we relax the needs for
power level differentiation, based on the additive nature of EM
waves, considering the BPSK modulation scheme, it is not difficult
to see that the BPSK demodulation process follows the majority
principle.

To illustrate this, let us consider a simple case of three
transmitters, $N_i \; (i \in\{1,2,3\})$, and one common receiver,
$N_0$. Let $a_i$ be the amplitude of the BPSK signals corresponding
to the transmitted binary information, and $a_0$ be the detected
amplitude of the BPSK signals. Based on the additive nature of EM
waves, we have $a_0 = a_1+a_2+a_3$. Considering a common design of a
BPSK demodulator with a matched filter and a detection device, the
demodulator produces 1 if $a_0 > 0$ and 0 otherwise.
Table~\ref{table:BPSK} exhausts all possible inputs and shows the
relationship between the input and the output binary information. As
can be seen, the relationship follows the majority principle.

\begin{table}
\begin{center}
\caption{\label{table:BPSK} An example of BPSK Demodulation for three
transmitters.}
\begin{tabular}{|c|c|c|c|c|c|c|c|c|c|}
\hline \ & $N_1$ & 0 & 0 & 0 & 0 & 1 & 1 & 1 & 1 \\
\cline{2-10}
\ Input & $N_2$ & 0 & 0 & 1 & 1 & 0 & 0 & 1 & 1 \\
\cline{2-10}
\ & $N_3$ & 0 & 1 & 0 & 1 & 0 & 1 & 0 & 1 \\
\hline
\ Output & $N_0$ & 0 & 0 & 0 & 1 & 0 & 1 & 1 & 1 \\
\hline
\end{tabular}
\end{center}
\end{table}

A closer examination of BPSK shows that any pair of inputs that
holds different binary information is offset when added at the EM
level. As a result, the remaining input decides the binary outcome.
In the case of a tie, since the detected energy fails to reach a
threshold after the matched filter, a consistent conclusion will be
made. Without loss of generality, we assume it to be binary 0.

\subsection{Our Proposed Scheme for Collision Decoding}

Using the majority principle, we design a coding scheme that enables the common
receiver of a collided transmission to tell the presence of individual
transmission involved in the collision. We first illustrate the application of
collision decoding using multicast application, followed by the proposed coding
scheme.

Consider a wireless network consisting of a basestation and a collection of
stations within wireless coverage of the basestation. For the multicast
transmission, the basestation broadcasts the transmission on the wireless
channel and only the intended receivers will process the transmission. In our
design, the basestation also embeds information in its data transmission to
tell each individual intended receiver of its unique identifier.

It is possible that not all intended receivers detect the transmission due to,
for example noise. If an intended receiver detects the broadcast transmission,
it first extracts its identifier embedded by the basestation, say $i$, and
immediately replies an ACK with a predefined unique bitstream, $\mathbf{s}_i$,
as part of ACK.

With the immediate replies of ACK transmissions from multiple
receivers, a collision of ACK transmissions occurs. The basestation
decodes the superimposed transmission using a BPSK demodulator and
as discussed in the previous subsection, the decoded bitstream, say
$\mathbf{v}$, will follow majority principle. We shall show that
there exists a coding scheme such that the basestation produces a
unique bitsream for a particular combination of the receivers'
bitstreams. This unique bitsream enables the basestation to identify
whether a particular intended receiver has replied ACK indicating
the success delivery of the multicast packet to that receiver.

Consider the number of receiver, $N$, to be an odd number, and a bitstream has
$V$ bits. Let $R=\frac{N+1}{2}$. We first construct a binary matrix of size $N
\times V$ such that each column contains exactly $R$ of binary 1 and $R-1$ of
binary 0, with a unique permutation. Exhausting all permutations given $R$
number of binary 1 and $R-1$ number of binary 0 produces ${N \choose R}$ unique
patterns, and hence $V = {N \choose R}$. With this construction, the binary
matrix holds $N$ number of unique $V$-bit bitstreams, each of which will be
assigned to a receiver. While our scheme uses bit sequences of $N \choose R$
bits, this method remains adequate to support up to 15 stations as the required
number of bits is below 8000 bits (or 1000 bytes), which is suitable for
existing protocols to carry.

We give an example of $N=3$ in the following. According to our scheme, the
binary matrix for $N=3$, $\mathbf{M}_3$, can be constructed as
\[
\mathbf{M}_3 = \left[\begin{array}{ccc} 1 & 1 & 0 \\ 1 & 0 & 1 \\ 0 & 1 & 1
\end{array}\right].
\]

Each row in the binary matrix represents the unique bitstream for each receiver
to transmit in its ACK. In Table~\ref{table:3nodes}, we show the assignment of
bitstreams to the three receivers, $N_1, N_2, N_3$. Moreover, we provide the
decoded bitstreams at the basestation given all possible combinations of ACK
replies. In this example, we see that a particular combination of receivers can
be uniquely identified by the decoded bitstream.

\begin{table}
\begin{center}
\caption{\label{table:3nodes} The decoded bitstreams (based on the majority
principle) of different receiver combinations for $N=3$.}
\begin{tabular}{|c|c|}
\hline

Receiver Combination    &{Decoded Bitstream} \\ \hline

$N_1$   &110 \\ \hline

$N_2$   &101 \\ \hline

$N_3$   &011 \\ \hline

($N_1$,$N_2$)   &100 \\ \hline

($N_1$,$N_3$)   &010 \\ \hline

($N_2$,$N_3$)   &001 \\ \hline

($N_1$,$N_2$,$N_3$)   &111 \\ \hline
\end{tabular}
\end{center}
\end{table}

\section{The Proof} \label{sec:proof}

\newtheorem{theorem}{Theorem}
\newtheorem{lemma}{Lemma}
\newtheorem{claim}{Claim}
\newtheorem{property}{Property}
\newtheorem{proposition}[theorem]{Proposition}
\newtheorem{corollary}[theorem]{Corollary}

\newenvironment{definition}[1][Definition]{\begin{trivlist}
\item[\hskip \labelsep {\bfseries #1}]}{\end{trivlist}}
\newenvironment{example}[1][Example]{\begin{trivlist}
\item[\hskip \labelsep {\bfseries #1}]}{\end{trivlist}}
\newenvironment{remark}[1][Remark]{\begin{trivlist}
\item[\hskip \labelsep {\bfseries #1}]}{\end{trivlist}}

\newcommand{\qed}{\nobreak \ifvmode \relax \else
      \ifdim\lastskip<1.5em \hskip-\lastskip
      \hskip1.5em plus0em minus0.5em \fi \nobreak
      \vrule height0.75em width0.5em depth0.25em\fi}

In this section, we show that our proposed coding scheme guarantees the
uniqueness of the decoded bitstream for a particular receiver combination for
any odd value of $N$. In case that the number of receivers $N$ is an even
number, we can use $N+1$ codes for the same purpose, and the additional
bitstream in the codes will not be used.

Consider $N$ stations, based on our design, we can construct a coefficient
matrix $\mathbf{M_N}$, where each of its elements is either $+1$ or
$-1$\footnote{Here we use $+1$ and $-1$ to replace $1$ and $0$ for better
illustration.}. Let $G=\{1,...,N\}$ be the set containing all stations and
$R=\frac{N+1}{2}$. According to our design, each of the column in
$\mathbf{M_N}$ contains exactly $R$ number of $+1$ and $R-1$ number of $-1$. In
other words, for a given column $c$, $\sum_{r \in G} \mathbf{M_N}(r,c)=1$.
Based on this design, exhaustive permutation of $+1$ and $-1$ for a column
construction gives ${N \choose R}$ unique patterns. The coefficient matrix
$\mathbf{M_N}$ that holds non-repetitive patterns of columns thus has a size of
$N \times {N \choose R}$.

We define $F(G_0,c)=\sum_{r \in G_0} \mathbf{M_N}(r,c)$ where $G_0 \subseteq
G$. In other words, the function $F(G_0,c)$ gives the sum of the values
corresponding to a given column $c$ and a particular collection of stations
$G_0$. Let $\Omega_+(G_0,c)$ (resp. $\Omega_-(G_0,c)$) denote the function that
counts the number of $+1$ (resp. $-1$) corresponding to the the column $c$ and
the collection of stations given in $G_0$. By definition, we have
\begin{equation}\label{eq:omega}
F(G_0,c) = \Omega_+(G_0,c)-\Omega_-(G_0,c).
\end{equation}

We further use the notation $|G_0|$ to denote the cardinality of the set $G_0$,
$G_0'=G \setminus G_0$ to denote the complementary set of $G_0$, and
$\overrightarrow{G_0}$ to denote a vector whose elements are given by
\begin{equation} \label{eq:gvec}
\overrightarrow{G_0}(c) = \left\{
\begin{array}{ll}
1, & F(G_0,c) \geq 1 \\
0, & F(G_0,c) < 1 \\
\end{array}
\right.
\end{equation}
where $c=1,2,...,{N \choose R}$ and $R=\frac{N+1}{2}$.

Based on the above definition, we first have the following properties about
$F(\cdot)$.

\begin{claim} \label{claim:op-plus}
Let $G_1, G_2 \subseteq G$ and $G_1 \cap G_2 = \emptyset$. If $H = G_1 \cup
G_2$, then for all $c$, $F(H,c) = F(G_1,c) + F(G_2,c)$.
\end{claim}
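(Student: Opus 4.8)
The plan is to observe that $F(\cdot,c)$ is, by definition, nothing more than a finite sum of coefficient-matrix entries indexed by the station set, namely $F(G_0,c)=\sum_{r\in G_0}\mathbf{M_N}(r,c)$, and that additivity of such a sum over a partition of its index set is the only fact needed. The single key step is thus to split the summation range: since $G_1\cap G_2=\emptyset$ and $H=G_1\cup G_2$, every index $r\in H$ lies in exactly one of $G_1$ or $G_2$, so that
\[
F(H,c)=\sum_{r\in H}\mathbf{M_N}(r,c)=\sum_{r\in G_1}\mathbf{M_N}(r,c)+\sum_{r\in G_2}\mathbf{M_N}(r,c)=F(G_1,c)+F(G_2,c)
\]
for each column $c\in\{1,\dots,{N\choose R}\}$.

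If one prefers to argue through the counting functions $\Omega_+$ and $\Omega_-$ rather than directly through the sum, the alternative plan is to first note that disjointness of $G_1$ and $G_2$ gives $\Omega_+(H,c)=\Omega_+(G_1,c)+\Omega_+(G_2,c)$ and likewise for $\Omega_-$ (a $+1$ or $-1$ entry contributed by a station of $H$ is counted on exactly one side), and then to subtract these two identities and invoke~\eqref{eq:omega} to recover $F(H,c)=F(G_1,c)+F(G_2,c)$. Either route is immediate.

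I do not expect any real obstacle here; the statement is a bookkeeping identity whose role is simply to isolate the linearity of $F$. I anticipate it being used repeatedly in the sequel --- for instance to write $F(G_0',c)=F(G,c)-F(G_0,c)$ and to track how adding or removing a single station shifts a column sum by exactly $\pm 1$ --- so I would keep the proof to the one-line summation split displayed above and then proceed to the properties that actually do require work.
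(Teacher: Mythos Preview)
Your proof is correct and matches the paper's own argument essentially line for line: the paper also unwinds the definition $F(H,c)=\sum_{r\in H}\mathbf{M_N}(r,c)$ and splits the sum over the disjoint union $G_1\cup G_2$. Your alternative route via $\Omega_\pm$ is a harmless embellishment, but the primary argument is identical to the paper's.
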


\begin{claim} \label{claim:op-minus}
Let $G_1 \subset G_2 \subseteq G$. If $H = G_2 \setminus G_1$, then for all
$c$, $F(H,c) = F(G_2,c) - F(G_1,c)$.
\end{claim}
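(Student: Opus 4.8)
The plan is to obtain this identity as an immediate consequence of Claim~\ref{claim:op-plus} together with the disjoint decomposition of $G_2$ that $G_1$ induces. First I would observe that, since $G_1 \subset G_2$, the set $H = G_2 \setminus G_1$ satisfies $G_1 \cap H = \emptyset$ and $G_1 \cup H = G_2$; in other words $\{G_1, H\}$ is a partition of $G_2$. Applying Claim~\ref{claim:op-plus} to this partition gives, for every column $c$,
\begin{equation}
F(G_2,c) = F(G_1 \cup H, c) = F(G_1,c) + F(H,c),
\end{equation}
and rearranging yields $F(H,c) = F(G_2,c) - F(G_1,c)$, which is exactly the asserted identity.

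Alternatively, and perhaps more transparently, I would argue straight from the definition $F(G_0,c) = \sum_{r \in G_0} \mathbf{M_N}(r,c)$. Because $G_1 \subseteq G_2$, every index appearing in the sum over $G_1$ also appears in the sum over $G_2$, so the finite sum over $G_2 \setminus G_1$ satisfies $\sum_{r \in G_2 \setminus G_1} \mathbf{M_N}(r,c) = \sum_{r \in G_2} \mathbf{M_N}(r,c) - \sum_{r \in G_1} \mathbf{M_N}(r,c)$, i.e. $F(H,c) = F(G_2,c) - F(G_1,c)$.

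There is essentially no obstacle here; the statement is a routine bookkeeping fact about sums over nested finite sets. The only place the hypothesis is genuinely used is the inclusion $G_1 \subset G_2$: it guarantees that $G_2 \setminus G_1$ together with $G_1$ reassembles $G_2$ with no missing or leftover terms, so the subtraction is valid term by term. I would therefore keep the write-up to two or three lines and invoke Claim~\ref{claim:op-plus} rather than repeat the same manipulation, reserving the real effort in this section for the subsequent lemmas about $\overrightarrow{G_0}$ and the uniqueness of the decoded bitstreams.
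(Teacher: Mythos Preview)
Your proposal is correct and matches the paper's own treatment: the paper proves Claim~\ref{claim:op-plus} by splitting the sum over a disjoint union and then remarks that Claim~\ref{claim:op-minus} ``can be established with the same approach,'' which is exactly your second argument, while your first argument (deducing it from Claim~\ref{claim:op-plus} via the partition $G_2 = G_1 \cup H$) is an equally valid one-line variant.
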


The above claims can be easily established by set operations. By definition,
$F(H,c) = \sum_{r \in H} \mathbf{M_N}(r,c)$. Given that $H = G_1 \cup G_2$, we
obtain
\[
\begin{array}{ll}
F(H,c) & = \sum\limits_{r \in (G_1 \cup G_2)} \mathbf{M_N}(r,c) \\
       & = \sum\limits_{r \in G_1} \mathbf{M_N}(r,c) + \sum\limits_{r \in G_2} \mathbf{M_N}(r,c)
\end{array}
\]
since $G_1 \cap G_2 = \emptyset$. With the above result, we have established
claim \ref{claim:op-plus}. Claim~\ref{claim:op-minus} can be established with
the same approach.

Since in our design, each column in $\mathbf{M_N}$ contains exactly
$\frac{N+1}{2}$ number of $+1$ and $\frac{N-1}{2}$ number of $-1$, then by
definition
\begin{equation}\label{eq:F}
\Omega_+(G,c) = \frac{N+1}{2} \\
\Omega_-(G,c) = \frac{N-1}{2} \\
F(G,c)=1.
\end{equation}

The above gives the following lemmas.

\begin{lemma} \label{lemma:2a}
Let $G_1 \subset G$ where $|G_1|=g$ and $g$ is an odd integer $\leq N-1$. There
exists a column $c$ in $\mathbf{M_N}$ such that $F(G_1,c)=1$.
\end{lemma}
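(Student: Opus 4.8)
The plan is to prove the lemma constructively: exhibit an explicit admissible column pattern that forces $F(G_1,c)=1$, and then invoke the defining property of $\mathbf{M_N}$ that it realizes \emph{every} admissible column pattern (all ${N \choose R}$ of them) exactly once.

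First I would reformulate the target equation as a counting condition. Since $|G_1|=g$, we have $\Omega_+(G_1,c)+\Omega_-(G_1,c)=g$, so by~(\ref{eq:omega}), $F(G_1,c)=2\,\Omega_+(G_1,c)-g$. Hence $F(G_1,c)=1$ is equivalent to $\Omega_+(G_1,c)=\frac{g+1}{2}$, which is a legitimate (integer) requirement precisely because $g$ is odd. So the task reduces to building one column whose $+1$ entries, restricted to the rows indexed by $G_1$, number exactly $\frac{g+1}{2}$.

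Next I would count how the $R=\frac{N+1}{2}$ ones of such a column must be split between $G_1$ and its complement $G_1'$. We put $\frac{g+1}{2}$ ones on rows of $G_1$ (leaving $\frac{g-1}{2}$ rows of $G_1$ equal to $-1$) and the remaining $R-\frac{g+1}{2}=\frac{N-g}{2}$ ones on rows of $G_1'$ (leaving $|G_1'|-\frac{N-g}{2}=\frac{N-g}{2}$ rows of $G_1'$ equal to $-1$). I would then check the feasibility side conditions: since both $N$ and $g$ are odd, $N-g$ is even and $\frac{N-g}{2}$ is a nonnegative integer; $\frac{g+1}{2}\le g$ because $g\ge 1$; and $\frac{N-g}{2}\le |G_1'|=N-g$ because $N\ge g$. (In fact the hypothesis $g\le N-1$ with $N$ odd forces $g\le N-2$, so $G_1'$ has at least two elements; this is not needed for the count but is reassuring.) Summing, the constructed column has $\frac{g+1}{2}+\frac{N-g}{2}=\frac{N+1}{2}=R$ entries equal to $+1$ and $\frac{g-1}{2}+\frac{N-g}{2}=\frac{N-1}{2}=R-1$ entries equal to $-1$, so it is an admissible pattern.

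Finally, since $\mathbf{M_N}$ contains all ${N \choose R}$ admissible patterns among its columns, the pattern just built occurs as some column $c$, and for that $c$ we have $\Omega_+(G_1,c)=\frac{g+1}{2}$, i.e.\ $F(G_1,c)=1$, which proves the lemma. I do not expect a real obstacle here; the only place demanding care is the bookkeeping of the previous paragraph—verifying that the prescribed numbers of $+1$'s inside and outside $G_1$ are nonnegative integers not exceeding the counts of available rows—and that is exactly where the parity hypotheses on $g$ and $N$ get used.
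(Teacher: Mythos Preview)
Your proposal is correct and follows essentially the same constructive approach as the paper: specify the counts $\Omega_+(G_1,c)=\frac{g+1}{2}$, $\Omega_-(G_1,c)=\frac{g-1}{2}$, $\Omega_+(G_1',c)=\Omega_-(G_1',c)=\frac{N-g}{2}$, verify the resulting column has $R$ ones and $R-1$ minus ones, and invoke the exhaustiveness of $\mathbf{M_N}$. Your version is slightly more explicit in checking the integrality and nonnegativity side conditions, but the argument is the same.
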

\begin{proof}
By definition, $\mathbf{M_N}$ holds exhaustive patterns of columns with exactly
$R$ number of $+1$ and $R-1$ number of $-1$. In other words, any column with
exactly $R$ number of $+1$ and $R-1$ number of $-1$ is a column of
$\mathbf{M_N}$.

Let $G_1 \subset G$ where $|G_1|=g$ and $g$ is an odd integer $\leq N-1$. We
create a column $c$ such that $\Omega_+(G_1,c) = \frac{g+1}{2}$,
$\Omega_-(G_1,c) = \frac{g-1}{2}$ and
$\Omega_+(G_1',c)=\Omega_-(G_1',c)=\frac{N-g}{2}$. Since $G=G_1 \cup G_1'$ and
$G_1 \cap G_1' = \emptyset$, we yield
\[
\begin{array}{l}
\Omega_+(G,c) = \Omega_+(G_1,c) + \Omega_+(G_1',c) = \frac{N+1}{2}=R
\\
\Omega_-(G,c) = \Omega_-(G_1,c) + \Omega_-(G_1',c) =
\frac{N-1}{2}=R-1 \\
\end{array}
\]
which shows that $c$ is a column of $\mathbf{M_N}$. Since $\Omega_+(G_1,c) =
\Omega_-(G_1,c)+1$, by (\ref{eq:omega}) we thus have $F(G_1,c)=1$.

\vspace{-4.5mm}

\end{proof}

\vspace{4.5mm}

\begin{lemma} \label{lemma:2b}
Let $G_1 \subset G$ where $|G_1|=g$ and $g$ is a nonzero even integer $\leq
N-1$. There exists a column $c$ in $M_N$ such that $F(G_1,c)=0$.
\end{lemma}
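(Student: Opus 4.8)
The plan is to imitate the explicit column construction used in the proof of Lemma~\ref{lemma:2a}, exploiting the fact that $\mathbf{M_N}$ contains \emph{every} column pattern with exactly $R$ entries equal to $+1$ and $R-1$ entries equal to $-1$. Concretely, I would partition the column we wish to exhibit into its restriction to $G_1$ and its restriction to the complement $G_1' = G \setminus G_1$, prescribe the number of $+1$'s in each part, and then verify that the resulting global pattern is admissible, i.e.\ has exactly $R$ plus-ones in total.

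First I would place, inside $G_1$, exactly $\Omega_+(G_1,c) = g/2$ entries equal to $+1$ and $\Omega_-(G_1,c) = g/2$ entries equal to $-1$; this is possible since $g$ is even, and by (\ref{eq:omega}) it immediately forces $F(G_1,c) = 0$, which is the desired conclusion. Next I would fill the complement $G_1'$, whose cardinality $N-g$ is an \emph{odd} number (as $N$ is odd and $g$ even), with $\Omega_+(G_1',c) = \frac{N+1-g}{2}$ entries equal to $+1$ and $\Omega_-(G_1',c) = \frac{N-g-1}{2}$ entries equal to $-1$.

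The verification step is where one must be slightly careful. I would check: (a) that $\frac{N+1-g}{2}$ and $\frac{N-g-1}{2}$ are nonnegative integers --- integrality follows from $N-g$ being odd, nonnegativity of the second from the hypothesis $g \le N-1$, and of the first from $g \le N+1$; (b) that these two counts sum to $|G_1'| = N-g$, a one-line computation; and (c) that, applying Claim~\ref{claim:op-plus} to $G = G_1 \cup G_1'$, the total number of $+1$'s is $\Omega_+(G_1,c) + \Omega_+(G_1',c) = \frac{g}{2} + \frac{N+1-g}{2} = \frac{N+1}{2} = R$, and correspondingly the total number of $-1$'s is $R-1$. By the exhaustiveness of the columns of $\mathbf{M_N}$, such a column $c$ genuinely appears in $\mathbf{M_N}$, and by construction $F(G_1,c) = \Omega_+(G_1,c) - \Omega_-(G_1,c) = 0$.

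I do not anticipate a real obstacle: the argument is structurally identical to Lemma~\ref{lemma:2a}, with ``$g$ odd, split as $\frac{g+1}{2}$ versus $\frac{g-1}{2}$'' replaced by ``$g$ even, split as $\frac{g}{2}$ versus $\frac{g}{2}$,'' and the only place a hypothesis is genuinely used is the nonnegativity check $\frac{N-g-1}{2} \ge 0$, i.e.\ $g \le N-1$ --- the excluded case $g = N$ is precisely where the construction would break down.
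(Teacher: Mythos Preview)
Your proposal is correct and follows essentially the same construction as the paper: split $G_1$ evenly into $g/2$ plus-ones and $g/2$ minus-ones, put $\frac{N-g+1}{2}$ plus-ones and $\frac{N-g-1}{2}$ minus-ones in $G_1'$, and verify the totals match $R$ and $R-1$. Your version is in fact slightly more careful, as you explicitly check the integrality and nonnegativity conditions that the paper leaves implicit.
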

\begin{proof}
Let $G_1 \subset G$ where $|G_1|=g$ and $g$ is a nonzero even integer. We
create a column $c$ such that $\Omega_+(G_1,c) = \Omega_-(G_1,c) =
\frac{g}{2}$, $\Omega_+(G_1',c)=\frac{N-g+1}{2}$, and
$\Omega_-(G_1',c)=\frac{N-g-1}{2}$. Since $G=G_1 \cup G_1'$ and $G_1 \cap G_1'
= \emptyset$, we yield
\[
\begin{array}{l}
\Omega_+(G,c) = \Omega_+(G_1,c) + \Omega_+(G_1',c) = \frac{N+1}{2}=R
\\
\Omega_-(G,c) = \Omega_-(G_1,c) + \Omega_-(G_1',c) =
\frac{N-1}{2}=R-1 \\
\end{array}
\]
which shows that $c$ is a column of $\mathbf{M_N}$. Since $\Omega_+(G_1,c) =
\Omega_-(G_1,c)$, by (\ref{eq:omega}) we thus have $F(G_1,c)=0$.

\vspace{-4.5mm}

\end{proof}

\vspace{4.5mm}

\begin{proposition} \label{prop:code}
Consider a certain system with $N$ stations where $N \geq 1$. For any $G_1,G_2
\subset G$ and $G_1,G_2 \neq \emptyset$, $\overrightarrow{G_1} =
\overrightarrow{G_2}$ iif $G_1 = G_2$.
\end{proposition}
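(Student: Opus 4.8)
The plan is to prove the two directions separately. The ``if'' direction is immediate: if $G_1 = G_2$ then $F(G_1,c) = F(G_2,c)$ for every $c$, so by~(\ref{eq:gvec}) the two vectors coincide. For the ``only if'' direction I would argue the contrapositive: assuming $G_1 \neq G_2$ (both nonempty proper subsets of $G$), I would exhibit a single column $c$ at which $\overrightarrow{G_1}(c) \neq \overrightarrow{G_2}(c)$. Write $A = G_1 \setminus G_2$, $B = G_2 \setminus G_1$, $C = G_1 \cap G_2$, and $D = G \setminus (G_1 \cup G_2)$, so that $G$ is the disjoint union $A \cup B \cup C \cup D$. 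Since $G_1 \neq G_2$, at least one of $A,B$ is nonempty, and because we only need the two vectors to differ I may relabel $G_1 \leftrightarrow G_2$ and assume $A \neq \emptyset$.

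The key observation, via Claims~\ref{claim:op-plus} and~\ref{claim:op-minus}, is that $F(G_1,c) = F(A,c) + F(C,c)$ and $F(G_2,c) = F(B,c) + F(C,c)$, and that $F(X,c)\geq 1$ holds precisely when a strict majority of the entries that column $c$ assigns to the stations of $X$ equal $+1$ (since $\Omega_+(X,c)+\Omega_-(X,c)=|X|$). A column $c$ is just a choice of which $R=\frac{N+1}{2}$ stations receive $+1$, and, as already noted, every such choice occurs as a column of $\mathbf{M_N}$. So the task is to design a column making $G_1$ hold a $+1$-majority while $G_2$ does not. The natural design is to assign $+1$ to every station of $A$ and $-1$ to every station of $B$, then split $C$ and $D$ so as to (i) reach the exact total of $R$ ones and (ii) keep the number of $+1$'s inside $G_2$ at or below $\lfloor |G_2|/2\rfloor$. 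First I would dispose of the extreme case $|A| \geq R$ (resp.\ $|B| \geq R$) by simply taking the $+1$-set entirely inside $A$ (resp.\ $B$), which trivially separates the two subsets. Otherwise I would place $\delta = \min\bigl(|D|,\, R - \lfloor |G_1|/2\rfloor - 1\bigr)$ of the ones inside $D$, which already forces a $+1$-majority on $G_1$, and then pick the number $\gamma$ of $+1$'s inside $C$ subject to the constraints $\alpha+\gamma = R-\delta$, $0 \leq \alpha = R-\delta-\gamma \leq |A|$, $0 \leq \gamma \leq |C|$, and $\gamma \leq \lfloor |G_2|/2\rfloor$, where $\alpha$ is the number of ones assigned to $A$.

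The main obstacle is showing that the admissible window for $\gamma$ is nonempty for every combination of the cardinalities $|A|,|B|,|C|,|D|$ whose sum is the odd number $N$; this is a delicate floor-and-parity bookkeeping, and I expect it to be essentially the only real work in the proof. Carrying it out, one finds the window is always nonempty with exactly one exception: $|A| = 1$, $|B| = 0$, and $|C|$ odd, i.e.\ $G_1 = G_2 \cup \{x\}$ for a single $x \notin G_2$ with $|G_2|$ odd. For that degenerate configuration I would switch tactics and use Lemma~\ref{lemma:2a} directly: since $|G_2|$ is odd and $\leq N-1$, there is a column $c$ with $F(G_2,c) = 1$, and because $|G \setminus G_2| = N-|G_2| \geq 2$ is split evenly between $+1$ and $-1$ in that construction, the column can be chosen so that $x$ receives $-1$, i.e.\ $\mathbf{M_N}(x,c) = -1$. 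Then Claim~\ref{claim:op-plus} gives $F(G_1,c) = F(G_2,c) + \mathbf{M_N}(x,c) = 0 < 1$, so $\overrightarrow{G_2}(c) = 1 \neq 0 = \overrightarrow{G_1}(c)$. In every case a separating column has been produced, which finishes the contrapositive and hence the proposition. \qed
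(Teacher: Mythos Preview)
Your plan is sound and would yield a correct proof, but it is organised quite differently from the paper's argument. The paper also proves the contrapositive and also works with $K_1=G_1\setminus G_2$ and $K_2=G_2\setminus G_1$ (your $A$ and $B$), but instead of building the column from scratch on the four blocks $A,B,C,D$ it first splits on the \emph{parity of $|G_1|$} (taking $|G_1|\geq |G_2|$ WLOG). If $|G_1|$ is odd it invokes Lemma~\ref{lemma:2a} to pin $F(G_1,c)=1$ and then shows that, within the permutation freedom left by~(\ref{eq:case1-omega}), one can always arrange $F(K_1,c)-F(K_2,c)\geq 1$, which forces $F(G_2,c)\leq 0$. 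If $|G_1|$ is even it uses Lemma~\ref{lemma:2b} symmetrically to get $F(G_1,c)=0$ and $F(G_2,c)\geq 1$. In particular, the paper never needs your extreme-case branch $|A|\geq R$ or $|B|\geq R$, nor your separate treatment of $|A|=1$, $|B|=0$, $|C|$ odd: that configuration has $|G_1|$ even and is absorbed automatically by Case~(ii).

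What each approach buys: the paper's parity split makes the bookkeeping uniform---the same short inequality $F(K_1,c)-F(K_2,c)\geq 1$ (or its mirror) covers every situation, so there is no ``delicate floor-and-parity'' case analysis and no exceptional configuration. Your direct block construction is more explicit about where the $+1$'s actually sit, but the price is the feasibility check for $\gamma$, which you correctly identify as the real work yet do not carry out; you would still have to verify that the interval $[\max(0,R-\delta-|A|),\ \min(|C|,\lfloor|G_2|/2\rfloor,R-\delta)]$ is nonempty in all remaining subcases (e.g.\ when $|D|$ is the binding constraint on $\delta$, or when $|B|\geq 1$). That verification is doable, but the paper's route avoids it entirely.
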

\begin{proof}
By definition given in (\ref{eq:gvec}), we can easily see that if $G_1 = G_2$,
then $\overrightarrow{G_1} = \overrightarrow{G_2}$. In the following, we shall
prove that if $G_1 \neq G_2$, then $\overrightarrow{G_1}
\neq\overrightarrow{G_2}$.

We will examine two cases: (i) $|G_1|$ is an odd integer; (ii) $|G_1|$ is an
even integer. Without loss of generality, we assume that $|G_1| \geq |G_2|$.

{Case (i)}: $|G_1|$ is an odd integer. Let $g=|G_1|$. Based on
lemma~\ref{lemma:2a}, there exists a column $c$ in $\mathbf{M_N}$ such that
\begin{equation} \label{eq:case1-g1}
F(G_1,c)=1,
\end{equation}
and hence according to (\ref{eq:gvec}), we also have
\[
\overrightarrow{G_1}(c)=1.
\]

This column will have the following property
\begin{equation}\label{eq:case1-omega}
\begin{array}{l}
\Omega_+(G_1,c) = \frac{g+1}{2} \\
\Omega_-(G_1,c) = \Omega_+(G_1,c)-1 = \frac{g-1}{2} \\
\Omega_+(G_1',c) = \Omega_-(G_1',c) = \frac{N-g}{2}
\end{array}
\end{equation}
with any permutation.

If $G_1 \neq G_2$, there exists $K_1 = G_1 \setminus G_2 \neq \emptyset$ and
$K_2 = G_2 \setminus G_1$ where $|K_1|=k_1\geq 1$ and $|K_2|=k_2 \geq 0$. It is
clear that $K_1 \subseteq G_1$, $K_2 \subseteq G_1'$ imply
\[
1 \leq k_1 \leq g, k_2 \leq N-g.
\]
where $g$ is an odd integer and $N-g$ is an even integer.

With the above condition, there exists a permutation within $c$ such that
\[
\begin{array}{l}
\Omega_+(K_1,c) = \lfloor\frac{k_1}{2}\rfloor + 1 \leq \frac{g+1}{2} \\
\Omega_-(K_1,c) = \lfloor\frac{k_1-1}{2}\rfloor \leq \frac{g-1}{2} \\
\Omega_+(K_2,c) = \lfloor\frac{k_2}{2}\rfloor \leq \frac{N-g}{2} \\
\Omega_-(K_2,c) = \lfloor\frac{k_2+1}{2}\rfloor \leq \frac{N-g}{2} \\
\end{array}
\]
that confirms $K_1 \subseteq G_1$, $K_2 \subseteq G_1'$ given
(\ref{eq:case1-omega}). The immediate result gives
\begin{equation} \label{eq:case1-k1}
F(K_1,c) = \left\{
\begin{array}{ll}
1, & k_1=1,3,... \\
2, & k_1=2,4,... \\
\end{array}\right.
\end{equation}
and
\begin{equation} \label{eq:case1-k2}
F(K_2,c) = \left\{
\begin{array}{ll}
-1, & k_2=1,3,... \\
 0, & k_2=0,2,4,... \\
\end{array}\right.
\end{equation}
which yield
\begin{equation} \label{eq:case1-k}
F(K_1,c) - F(K_2,c) \geq 1.
\end{equation}

Since $G_2 = (G_1 \setminus K_1) \cup K_2$, with (\ref{eq:case1-g1}) and
(\ref{eq:case1-k}), we obtain
\[
F(G_2,c) = F(G_1,c) - F(K_1,c) + F(K_2,c) \leq 0
\]
and hence $\overrightarrow{G_2}(c)=0$. Since $\overrightarrow{G_1}(c)=1$, there
exists a column in $\mathbf{M_N}$ such that $\overrightarrow{G_2}(c) \neq
\overrightarrow{G_2}(c)$, which is sufficient to show that
$\overrightarrow{G_1}\neq\overrightarrow{G_2}$.

{Case (ii)}: $|G_1|$ is an even integer. Let $g=|G_1|$. Based on
lemma~\ref{lemma:2b}, there exists a column $c$ in $\mathbf{M_N}$ such that
\begin{equation} \label{eq:case2-g1}
F(G_1,c)=0
\end{equation}
and hence according to (\ref{eq:gvec}), we also have
\[
\overrightarrow{G_1}(c)=0.
\]

This column will have the following property
\begin{equation}\label{eq:case2-omega}
\begin{array}{l}
\Omega_+(G_1,c) = \Omega_-(G_1,c) = \frac{g}{2} \\
\Omega_+(G_1',c) = \frac{N-g+1}{2} \\
\Omega_-(G_1',c) = \frac{N-g-1}{2}
\end{array}
\end{equation}
with any permutation.

If $G_1 \neq G_2$, there exist $K_1 = G_1 \setminus G_2 \neq \emptyset$ and
$K_2 = G_2 \setminus G_1$ where $|K_1|=k_1\geq 1$ and $|K_2|=k_2 \geq 0$. It is
clear that $K_1 \subseteq G_1$, $K_2 \subseteq G_1'$ imply
\[
1 \leq k_1 \leq g, k_2 \leq N-g.
\]
where $g$ is an nonzero even integer and $N-g$ is an odd integer.

We first exclude the condition where $k_2=0$. With the above condition, there
exists a permutation within $c$ such that
\[
\begin{array}{l}
\Omega_+(K_1,c) = \lfloor\frac{k_1}{2}\rfloor \leq \frac{g}{2} \\
\Omega_-(K_1,c) = \lfloor\frac{k_1+1}{2}\rfloor \leq \frac{g}{2} \\
\Omega_+(K_2,c) = \lfloor\frac{k_2}{2}\rfloor + 1 \leq \frac{N-g+1}{2} \\
\Omega_-(K_2,c) = \lfloor\frac{k_2-1}{2}\rfloor \leq \frac{N-g-1}{2} \\
\end{array}
\]
that confirms $K_1 \subseteq G_1$, $K_2 \subseteq G_1'$ given
(\ref{eq:case2-omega}). The immediate result gives
\begin{equation}
F(K_1,c) = \left\{
\begin{array}{ll}
-1, & k_1=1,3,... \\
 0, & k_1=2,4,... \\
\end{array}\right.
\end{equation}
and
\begin{equation}
F(K_2,c) = \left\{
\begin{array}{ll}
1, & k_2=1,3,... \\
2, & k_2=2,4,... \\
\end{array}\right.
\end{equation}
which yield
\begin{equation} \label{eq:case2-k}
F(K_2,c) - F(K_1,c) \geq 1.
\end{equation}

For $k_2=0$ which implies $G_2 \subset G_1$, since $G_2$ cannot be empty, $k_1
= g-|G_2|$ must be $\leq g$, there exists another permutation in $c$ such that
$\Omega_+(K_1,c)<\Omega_-(K_1,c)\leq \Omega_-(G_1,c)$ giving
\begin{equation}
F(K_1,c) = \left\{
\begin{array}{ll}
-1, & k_1=1,3,... \\
-2, & k_1=2,4,... \\
\end{array}\right.
\end{equation}
and $F(K_2,c) - F(K_1,c) \geq 1$ as in (\ref{eq:case2-k}) since $F(K_2,c)=0$.

Given that $G_2 = (G_1 \setminus K_1) \cup K_2$, with (\ref{eq:case2-g1}) and
(\ref{eq:case2-k}), we obtain
\[
F(G_2,c) = F(G_1,c) - F(K_1,c) + F(K_2,c) \geq 1
\]
and hence $\overrightarrow{G_2}(c)=1$. Since $\overrightarrow{G_1}(c)=0$, there
exists a column in $\mathbf{M_N}$ such that $\overrightarrow{G_2}(c) \neq
\overrightarrow{G_2}(c)$ which is sufficient to show that
$\overrightarrow{G_1}\neq\overrightarrow{G_2}$.

With the above two cases, we have proven that if $G_1 \neq G_2$, then
$\overrightarrow{G_1}\neq\overrightarrow{G_2}$. Together with the earlier
establishment that if $G_1 = G_2$, then $\overrightarrow{G_1} =
\overrightarrow{G_2}$, we conclude that for any $G_1,G_2 \subset G$ and
$G_1,G_2 \neq \emptyset$, $\overrightarrow{G_1} = \overrightarrow{G_2}$ iif
$G_1 = G_2$.

\vspace{-4.5mm}

\end{proof}

\vspace{4.5mm}

With Proposition~\ref{prop:code}, we have shown that based on our coding
scheme, a particular decoded bitstream uniquely identifies a particular
combination of receivers. This further allows the decoder to identify the
presence of each individual receiver in the superimposed transmission. In the
case that no receiver replies the acknowledgement, no transmission will occur
on the channel which indicates the absence of all receivers.

\section{Conclusion} \label{sec:conclusion}

Motivated by the concept of PNC and the idea of decoding ACKs for
improved broadcast reliability, we addressed the weaknesses of the
current designs and proposed a coding scheme that achieves robust
collision decoding without the need for precise energy detection. We
first established the majority principle between the input and the
output bitstreams based on BPSK modulation. Using the majority
principle, we developed a coding scheme achieving the identification
of individual acknowledgement transmission from a collided
transmission. We proved that our proposed coding scheme guarantees
the uniqueness of the decoded bitstream for a particular receiver
combination.

While our given approach has been demonstrated for BPSK modulation
scheme, it can similarly be extended to other modulation schemes
which satisfy the general PNC modulation-demodulation mapping
principle. Hence it is possible to implement a modified collision
coding schemes for M-ary PSK, PAM, M-ary QAM, (M-ary represent the
set of digital symbols) as well, which satisfy the general PNC
modulation-demodulation mapping principle. We expect this to be part
of our future research work.


%
\bibliographystyle{IEEEtran}


\begin{thebibliography}{1}

\bibitem{PNC}
S.~Zhang, S.~C. Liew, and P.~P. Lam, ``Hot topic: physical-layer
network coding,'' In ACM MobiCom 2006, Los Angeles, USA, Sept 2006.

\bibitem{pnc_capacity}
K. Lu, S. Fu, Y. Qian, H.-H. Chen, ``On Capacity of Random Wireless
Networks with Physical-Layer Network Coding,'' In IEEE Journal on
Selected Areas of Communications, Vol 27, Issue 5, p 763 - 772, Jun
2009.

\bibitem{Durvy07}
M.~Durvy, C.~Fragouli, and P.~Thiran, ``Towards Reliable
Broadcasting using ACKs ,'' In IEEE ISIT 2007, Nice, France, June
2007.

\bibitem{Pagani97}
E. Pagani and G. P. Rossi, ``Reliable broadcast in mobile multihop
packet network," In ACM MobiCom, Budapest, Hungary, Sept 1997.

\bibitem{Tang01}
K. Tang and M. Gerla, ``MAC reliable broadcast in ad hoc networks",
In IEEE MILCOM, Vienna, USA, Oct 2001.

\bibitem{Ahlswede00}
R. Ahlswede, N. Cai, S.-Y. R. Li, and R. W. Yeung, ``Network
Information Flow'', In IEEE Transactions on Information Theory, Vol
46, Issue 4, July 2000.

\bibitem{Katti06}
S. Katti, H. Rahul, W. Hu, D. Katabi, M. Medard, and J. Crowcroft,
``XORs in the Air: Practical Wireless Network Coding,'' In IEEE/ACM
Transactions on Networking, Vol 16, Issue 3, p 497 - 510, Jun 2008.

\end{thebibliography}
%

\end{document}